\newtheorem{theorem}{Theorem}[section]
\newtheorem{definition}[theorem]{Definition}
\begin{document}

\begin{frontmatter}



\title{ Several classes of 0-APN power functions over $\mathbb{F}_{2^{n}}$}

\author[SWJTU]{Tao Fu}
 \ead{futao0923@163.com}

\author[SWJTU]{Haode Yan\corref{cor1}}
 \ead{hdyan@swjtu.edu.cn}

 \cortext[cor1]{Corresponding author}
 \address[SWJTU]{School of Mathematics, Southwest Jiaotong University, Chengdu, 610031, China}


%


\begin{abstract}
Recently, the investigation of Partially APN functions has attracted a lot of attention. In this paper, with the help of resultant elimination and MAGMA, we propose several new infinite classes of 0-APN power functions over $\mathbb{F}_{2^{n}}$. By the main result in \cite{CCZ}, these $0$-APN power functions are CCZ-inequivalent to the known ones. Moreover, these infinite classes of 0-APN power functions can explain some exponents for $1\leq n\leq11$ which are not yet ``explained"  in the tables of Budaghyan et al. \cite{LNCP}.	
\end{abstract}

\begin{keyword}
APN function \sep 0-APN  function \sep  Power function \sep Resultant
\MSC   11T06 \sep 94A60

\end{keyword}

\end{frontmatter}

\section{Introduction}

Let $\mathbb{F}_{2^n}$ be the finite field with $2^n$ elements and $\mathbb{F}_{2^n}^*=\mathbb{F}_{2^n}\setminus \{0\}$, where $n$ is a positive integer.
Let $F$ be a function from  $\mathbb{F}_{2^n}$ to itself. The derivative function of $F$ with respect to any $a\in\mathbb{F}_{2^n}$ is the function $D_aF$ from $\mathbb{F}_{2^n}$ to $\mathbb{F}_{2^n}$ given by $D_aF(x)=F(x+a)+F(x)$, where $x\in \mathbb{F}_{2^n}$. For any $a\in\mathbb{F}_{2^n}^*$ and $b\in\mathbb{F}_{2^n}$, let  $N_{F}(a,b)$ denote the number of solutions $x\in \mathbb{F}_{2^n}$ of $D_aF(x)=b$. The differential uniformity of $F$ is 
define as $\Delta_F=\text{max}_{a\in\mathbb{F}_{2^n}^*,b\in\mathbb{F}_{2^n}}N_{F}(a,b)$.
We call the function $F$ differentially $k$-uniform if  $\Delta_F=k$. We expect $\Delta_F$ to achieve a smaller value when $F$ is used for resisting differential attacks. Note that the solutions of $D_aF(x)=b$ come in pairs, then the smallest $\Delta_F$ is $2$. When $\Delta_F=2$, $F$ is called an almost perfect nonlinear (APN for short) function. Currently known only 6 infinite classes of APN power functions over $\mathbb{F}_{2^n}$ are given in \cite{Welch}-\cite{Kasami}, \cite{Inverse}. 


In order to study the conjecture of the highest possible algebric degree of APN functions, the concept of partially APN functions was first introduced by Budaghyan et al. \cite{LNSCP} as follows.

\begin{definition}(\cite{LNSCP})
	Let $F$ be a function from $\mathbb{F}_{2^n}$ to itself. For a fixed $x_{0}\in \mathbb{F}_{2^n}$, we call $F$ is $x_{0}$-APN (or partially APN) if all the points $u,v$ satisfying $F(x_{0})+F(u)+F(v)+F(x_{0}+u+v)=0$ belong to the curve $(x_{0}+u)(x_{0}+v)(u+v)=0$.
	\end{definition}

If $F$ is APN, it is obvious that $F$ is $x_0$-APN for any $x_0\in\mathbb{F}_{2^n}$. Conversely, there are many examples that are $x_0$-APN for some $x_0\in\mathbb{F}_{2^n}$ but not APN. In\cite{LNSCP}, Budaghyan et al. provided some propositions and characterizations of partial APN functions. Moreover, Pott proved that for any $n\geq3$, there are partial $0$-APN permutations on $\mathbb{F}_{2^n}$ in \cite{pott}. When $F$ is a power mapping, i.e.,  $F(x)=x^d$ for some integer $d$, due to the nice algebric structure of $F$, we only need to consider the partial APN properties of $F$ at $x_0=0$ and $x_0=1$. Moreover, $F$ is 0-APN if and only if the equation $F(x+1)+F(x)+1=0$ has no solution in $\mathbb{F}_{2^n}\backslash{\mathbb{F}_2}$. Many classes of 0-APN but not APN power functions over $\mathbb{F}_{2^n}$ are constructed in \cite{LNSCP} and \cite{LNCP}. They also list all power functions $F(x)=x^d$ over $\mathbb{F}_{2^n}$ for $1 \leq n \leq 11 $ that are 0-APN but not APN \cite{LNCP}. Recently, Qu and Li provided seven classes of 0-APN power functions over $\mathbb{F}_{2^n}$, some of them were proved to be locally-APN \cite{QL}. Very recently, Wang and Zha proposed several new infinite classes of 0-APN power function by using the multivariate method and resultant elimination \cite{WZ}. However, some pairs of $(d,n)$ in \cite{LNCP} are not yet ``explained", we summarize them in Table \ref{table1}. In this paper, we extend them to new infinite families of 0-APN power functions.
This paper is organized as follows. Some basic results of the resultant of polynomials are introduced in Section \ref{Preliminaries}. In Section \ref{Newclasses}, we propose seven new infinite classes of 0-APN power functions over $\mathbb{F}_{2^{n}}$. We mention that we follow the approach proposed in \cite{QL} and \cite{WZ}. For the convenience, we summarize the seven classes of 0-APN power functions in Table \ref{table2}. By the main result in \cite{CCZ}, the 0-APN power functions obtained in this paper are CCZ-inequivalent to the known ones.


\begin{table}[!t]
\renewcommand\arraystretch{1.8}
\centering
\caption{0-APN power mappings $F(x)=x^d$ over $\mathbb{F}_{2^n}$ for $1\leq n\leq11$ which are not yet ``explained"}
\label{table1}       
\begin{tabular}{|l|l|}
\hline    $n$ & $d$  \\
 \hline   9&45,125 \\
 \hline   10&51,93, 105, 351, 447 \\
   \hline   11&59,93,169,243,303,507, 245,447, 89, 445\\         
\hline
\end{tabular}
\end{table}

\begin{table}[!t]
	\renewcommand\arraystretch{1.8}
	\centering
	\caption{New classes of 0-APN power functions $f(x)=x^d$ over $\mathbb{F}_{2^n}$}
	\label{table2}       
\begin{tabular}{|l|l|l|l|}
	\hline    $x^d$ & conditions & $(d,n)$ can be explained in Table \ref{table1} &Reference  \\
	 \hline	$x^{3\cdot2^{k}-7}$&$n=2k+1$&$(89,11)$&Thm \ref{TH6} \\ 
	\hline	$x^{2^{2k+1}-2^{k+1}-2^k+1}$&$n=3k+1$& $(105,10)$&Thm \ref{TH4}\\ 
	\hline   $x^{3(2^k-1)}$&$n=2k,3\nmid k$& $(93,10)$&Thm \ref{TH2} \\
	\hline	$x^{5(2^{k+1}+2^k+1)}$&$n=2k+1,k\not\equiv2\pmod{5}$& $(125,9),(245,11)$&Thm \ref{TH7}\\
	\hline   $x^{3(2^k-1)}$	&$n=2k+1,k\not\equiv13\pmod{27}$&$(45,9),(93,11)$&Thm \ref{TH3} \\
    \hline	$x^{3(2^{k+1}+1)}$&$n=3k+1,k\not\equiv9\pmod{14}$&$(51,10)$ &Thm \ref{TH5}\\
   	\hline   $x^{-9}$&$9\nmid n$& $(447,10)$& Thm \ref{TH1}  \\         
	\hline	
\end{tabular}
\end{table}


\section{On the resultant of polynomials}\label{Preliminaries}

In this section, in order to prove our results, we need concept and relevant conclusions of the resultant of two polynomials.

\begin{definition}(\cite{RH})
Let $\mathbb{K}$ be a field, $f(x)=a_{0}x^n+a_{1}x^{n-1}+\cdots+a_{n}\in\mathbb{K}[x]$ and $g(x)=b_{0}x^m+b_{1}x^{m-1}+\cdots+a_{m}\in\mathbb{K}[x]$ be two polynomials of degree $n$ and $m$ respectively, where $n,m\in\mathbb{N}$. Then the resultant $Res(f,g)$ of the two polynomials is defined by the determinant
\begin{equation*}
\begin{tikzpicture}
	\draw(0,0)node{$Res(f,g)=
		\left|
		\begin{array}{cccccccc}
			a_{0} & a_{1} & \cdots & a_{n} & 0&  & \cdots & 0 \\
			0 & a_{0} & a_{1} & \cdots & a_{n} & 0 & \cdots & 0 \\
			\vdots &  &   & &  &  &   & \vdots \\
			0 & \cdots& 0 & a_{0} & a_{1} &  & \cdots & a_{n} \\
			b_{0} & b_{1} & \cdots &  & b_{m}& 0 & \cdots & 0 \\
			0& b_{0} & b_{1} & \cdots &  & b_{m}&  \cdots & 0  \\
			\vdots &  &   &  &  &  &   & \vdots \\
			0 & \cdots & 0& b_{0} & b_{1} &  & \cdots & b_{m} \\
		\end{array}
		\right|
		$};
	\draw(3.6,1.1)node[right]{$\left.\rule{0mm}{11mm}\right\}m$ $rows$};
	\draw(3.6,-1.1)node[right]{$\left.\rule{0mm}{11mm}\right\}n$ $rows$};
\end{tikzpicture}
\end{equation*}
of order $m+n$.
	\end{definition}

If the degree of $f$ is $n$ and $f(x)=a_{0}(x-\alpha_{1})(x-\alpha_{2})\cdots(x-\alpha_{n})$ in the splitting field of $f$ over $\mathbb{K}$, then $Res(f,g)$ is also given by the formula
$$Res(f,g)=a_{0}^{m}\prod_{i=1}^{n}g(\alpha_{i}).$$
In this case, we have $Res(f,g)=0$ if and only if $f$ and $g$ have a common root, which means that $f$ and $g$ have a common divisor in $\mathbb{K}[x]$ of positive degree.

For two polynomials $F(x,y),G(x,y)\in\mathbb{K}[x,y]$ of positive degree in $y$, the resultant $Res(F,G,y)$ of $F$ and $G$ with respect to $y$ is the resultant of $F$ and $G$ when considered as polynomials in the single variable $y$. In this case, $Res(F,G,y)\in\mathbb{K}[x]\cap<F,G>$, where $<F,G>$ is the ideal generated by $F$ and $G$. Thus any pair $(a,b)$ with $F(a,b)=G(a,b)=0$ is such that $Res(F,G,y)(a)=0$.

\section{New classes of 0-APN power functions over $\mathbb{F}_{2^{n}}$}\label{Newclasses}

In this section, we give seven new classes of $0$-APN power functions $f(x)=x^d$ over $\mathbb{F}_{2^{n}}$. Let $d$ be uniquely determined by $k$. If $k$ and $n$ satisfy a linear relationship, we have the following two classes of $0$-APN power functions.

\begin{theorem}\label{TH6}
	Let $n$ and $k$ be positive integers with $n=2k+1$. Then 
	$f(x)=x^{3\cdot2^{k}-7}$
	is a $0$-APN function over $\mathbb{F}_{2^{n}}$.
\end{theorem}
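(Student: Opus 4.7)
The plan is to show that the equation $(x+1)^d + x^d + 1 = 0$ with $d = 3\cdot 2^k - 7$ has no solution in $\mathbb{F}_{2^n}\setminus\mathbb{F}_2$, which is exactly the 0-APN condition for power functions. Since the exponent is negative relative to the constant term, my first step is to clear denominators: assuming $x \notin \{0,1\}$, multiply through by $x^7(x+1)^7$ to obtain the polynomial identity
\begin{equation*}
x^7(x+1)^{3\cdot 2^k} + (x+1)^7 x^{3\cdot 2^k} + x^7(x+1)^7 = 0.
\end{equation*}
Introducing the substitution $y = x^{2^k}$ and applying the Frobenius, we get $(x+1)^{3\cdot 2^k} = (y+1)^3 = y^3+y^2+y+1$ and $x^{3\cdot 2^k} = y^3$, so the above becomes a bivariate equation
\begin{equation*}
G_1(x,y) := x^7(y^3+y^2+y+1) + (x+1)^7 y^3 + x^7(x+1)^7 = 0.
\end{equation*}

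Next I would produce a second polynomial relation in $x,y$. Since $n=2k+1$, raising $y = x^{2^k}$ to the $2^{k+1}$-th power gives $y^{2^{k+1}} = x^{2^{2k+1}} = x$, and similarly $x^{2^{k+1}} = y^2$. Applying the $2^{k+1}$-th power Frobenius to $G_1(x,y)=0$ and using these identities yields
\begin{equation*}
G_2(x,y) := y^{14}(x^3+x^2+x+1) + (y^2+1)^7 x^3 + y^{14}(y^2+1)^7 = 0.
\end{equation*}
Any $x \in \mathbb{F}_{2^n}\setminus\mathbb{F}_2$ violating 0-APN must, together with $y=x^{2^k}$, simultaneously satisfy $G_1$ and $G_2$.

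The elimination step is then to compute the resultant $R(x) := \mathrm{Res}(G_1, G_2, y) \in \mathbb{F}_2[x]$ by the determinant definition from Section \ref{Preliminaries}; by the remark after the definition, every such $x$ must be a root of $R(x)$. This is the computation where MAGMA is invoked. I expect $R(x)$ to factor as $x^{a}(x+1)^{b}\,h(x)$ for some polynomial $h(x)\in\mathbb{F}_2[x]$, where the $x^a$ and $(x+1)^b$ factors correspond to the excluded trivial roots $0$ and $1$. The final task is to verify that $h(x)$ has no root in $\mathbb{F}_{2^n}$, which I would attempt by factoring $h$ over $\mathbb{F}_2$ and checking that the degrees of its irreducible factors are all coprime to (or otherwise incompatible with) $n = 2k+1$; typically one finds irreducible factors of even degree, whose roots live in quadratic extensions that cannot embed in $\mathbb{F}_{2^{2k+1}}$ since $2k+1$ is odd.

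The main obstacle will be the last step: controlling the factorization of the (likely large-degree) resultant $h(x)$ and justifying rigorously that none of its factors splits over $\mathbb{F}_{2^{2k+1}}$ for every admissible $k$. Unlike the bookkeeping in forming $G_1,G_2$, this requires either a clean closed-form factorization of $h$ (so that one reads off the factor degrees), or an auxiliary argument exploiting the fact that odd $n$ forbids elements of order dividing $2^{2m}-1$ that are not already in $\mathbb{F}_2$ when only trivial common factors exist. I anticipate that $h(x)$ splits as a product of a few low-degree factors over $\mathbb{F}_2$, each with even degree, which would finish the proof.
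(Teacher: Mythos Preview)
Your approach---clear denominators, introduce $y=x^{2^k}$, apply the $2^{k+1}$-th Frobenius to obtain a second relation, and eliminate $y$ via a resultant---is exactly what the paper does. The gap is in your final expectation. The resultant does \emph{not} split into irreducible factors of even degree only: in fact (after removing the $x^a(x+1)^b$ part) it factors over $\mathbb{F}_2$ into irreducibles of degrees $2$, $3$, $7$, and $26$. The degree-$2$ and degree-$26$ factors are harmless because $n=2k+1$ is odd, so $\mathbb{F}_{2^2}\cap\mathbb{F}_{2^n}=\mathbb{F}_2$ and $\mathbb{F}_{2^{26}}\cap\mathbb{F}_{2^n}\subseteq\mathbb{F}_{2^{13}}$, while the degree-$26$ irreducibles have no roots in $\mathbb{F}_{2^{13}}$. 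But the degree-$3$ and degree-$7$ factors are genuinely problematic: $n=2k+1$ can be divisible by $3$ (precisely when $k\equiv 1\pmod 3$) or by $7$ (when $k\equiv 3\pmod 7$), and in those cases $\mathbb{F}_{2^3}$ or $\mathbb{F}_{2^7}$ does embed in $\mathbb{F}_{2^n}$.

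To close this gap, the paper goes back to the single-variable equation (your $G_1$ before introducing $y$) rather than relying on the resultant alone. If $x\in\mathbb{F}_{2^3}$ and $k\equiv 1\pmod 3$, then $x^{2^k}=x^2$; substituting this into the cleared-denominator equation collapses it to $x^7(x+1)^7(x^2+x+1)=0$, forcing $x\in\mathbb{F}_{2^2}\cap\mathbb{F}_{2^3}=\mathbb{F}_2$, a contradiction. The degree-$7$ case is handled the same way: if $x\in\mathbb{F}_{2^7}$ and $k\equiv 3\pmod 7$, then $x^{2^k}=x^8$, and substitution yields a polynomial whose nontrivial roots all lie in $\mathbb{F}_{2^4}$, again contradicting $x\in\mathbb{F}_{2^7}\setminus\mathbb{F}_2$. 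So your outline is correct in structure, but the endgame requires this additional back-substitution argument for the odd-degree factors, not merely a parity check on factor degrees.
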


\begin{proof} To show $f$ is $0$-APN, 
	it suffices to prove that the eqution
	\begin{equation}\label{3.6.1}
		(x+1)^{3\cdot2^{k}-7}+x^{3\cdot2^{k}-7}+1=0
	\end{equation}
	has no solution in $\mathbb{F}_{2^{n}}\setminus{\mathbb{F}_2}$. Assume that $x\in\mathbb{F}_{2^{n}}\setminus{\mathbb{F}_2}$ is a solution of (\ref{3.6.1}). Multiplying $x^7(x+1)^7$ on both sides of (\ref{3.6.1}), we have
	\begin{equation}\label{3.6.2}
		x^{2^{k+1}+9}+x^{2^{k}+9}+x^9+x^{3\cdot2^k+8}+x^{2^{k+1}+8}+x^{2^{k}+8}+x^{3\cdot2^k+1}+x^{16}=0.
	\end{equation}
	Let $y=x^{2^{k}}$, then $y^{2^{k+1}}=x$. (\ref{3.6.2}) can be written as
	\begin{equation}\label{3.6.3}
		x^9y^2+x^9y+x^9+x^8y^3+x^8y^2+x^8y+xy^3+x^{16}=0.
	\end{equation}
	Raising the $2^{k+1}$-th power on both sides of (\ref{3.6.3}), we have
	\begin{equation}\label{3.6.4}
		x^2y^{18}+xy^{18}+y^{18}+x^3y^{16}+x^{2}y^{16}+xy^{16}+x^3y^2+y^{32}=0.
	\end{equation}
	Computing the resultant of (\ref{3.6.3}) and (\ref{3.6.4}) with respect to $y$, and then decomposing it into the product of irreducible factors in $\mathbb{F}_{2}[x]$ as
	
	\begin{align}\label{3.6.5}
		&x^{57}(x+1)^{57}(x^2+x+1)^{10}(x^3+x+1)^4(x^3+x^2+1)^4(x^7+x^3+1)(x^7+x^4+1)\notag\\
		&(x^7+x^3+x^2+x+1)(x^7+x^6+x^5+x^3+x^2+x+1) \notag\\
		&(x^7+x^6+x^5+x^4+1)(x^7+x^6+x^5+x^4+x^2+x+1) \notag\\
		&(x^{26}+x^{16}+x^{15}+x^{14}+x^{13}+x^9+x^8+x^7+x^4+x^3+1)\notag\\
		&(x^{26}+x^{23}+x^{22}+x^{19}+x^{18}+x^{17}+x^{13}+x^{12}+x^{11}+x^{10}+1)\notag\\
		&(x^{26}+x^{23}+x^{22}+x^{21}+x^{18}+x^{16}+x^{15}+x^{13}+x^{12}+x^{7}+x^6+x^4+x^2+x+1)\notag\\
		&(x^{26}+x^{23}+x^{22}+x^{21}+x^{20}+x^{19}+x^{15}+x^{14}+x^{12}+x^{11}+x^{7}+x^6+x^3+x^2+1)\notag\\
		&(x^{26}+x^{24}+x^{18}+x^{15}+x^{12}+x^{11}+x^{10}+x^{9}+x^{6}+x^{5}+x^{4}+x^3+x^2+x+1)\notag\\
		&(x^{26}+x^{24}+x^{23}+x^{17}+x^{15}+x^{13}+x^{12}+x^{8}+x^{7}+x^{4}+x^2+x+1)\notag\\
		&(x^{26}+x^{24}+x^{23}+x^{20}+x^{19}+x^{15}+x^{14}+x^{12}+x^{11}+x^{7}+x^{6}+x^5+x^4+x^3+1)\notag\\
		&(x^{26}+x^{24}+x^{23}+x^{21}+x^{18}+x^{17}+x^{13}+x^{11}+x^{10}+x^{8}+x^{7}+x^3+x^2+x+1)\notag\\
		&(x^{26}+x^{25}+x^{24}+x^{22}+x^{19}+x^{18}+x^{14}+x^{13}+x^{11}+x^{9}+x^{3}+x^{2}+1)\notag\\
		&(x^{26}+x^{25}+x^{24}+x^{22}+x^{20}+x^{19}+x^{14}+x^{13}+x^{11}+x^{10}+x^{8}+x^{5}+x^4+x^3+1)\notag\\
		&(x^{26}+x^{25}+x^{24}+x^{23}+x^{19}+x^{18}+x^{16}+x^{15}+x^{13}+x^{9}+x^{8}+x^{5}+x^3+x^2+1)\notag\\
		&(x^{26}+x^{25}+x^{24}+x^{23}+x^{22}+x^{21}+x^{20}+x^{17}+x^{16}+x^{15}+x^{14}+x^{11}+x^8+x^2+1).
	\end{align}
	Note that $x\notin\mathbb{F}_{2}$, we assert that $x\in\mathbb{F}_{2^2}$, $x\in\mathbb{F}_{2^3}$, $x\in\mathbb{F}_{2^7}$  or $x\in\mathbb{F}_{2^{26}}\setminus\mathbb{F}_{2^{13}}$.
	
	Assume that $x\in\mathbb{F}_{2^2}$. We have $x\in\mathbb{F}_{2^2}\cap\mathbb{F}_{2^n}=\mathbb{F}_{2}$ since $n$ is odd, which is a contradiction.	
	
	Assume that $x\in\mathbb{F}_{2^3}$. If $k\not\equiv1\pmod3$, we have $x\in\mathbb{F}_{2^3}\cap\mathbb{F}_{2^n}=\mathbb{F}_{2}$, which is a contradiction. If $k\equiv1\pmod3$, then we have $x^{2^{k}}=x^2$ since $x\in\mathbb{F}_{2^3}$. Thus we derive from (\ref{3.6.2}) that $x^7(x+1)^7(x^2+x+1)=0$. This means $x\in\mathbb{F}_{2^2}$, and then $x\in\mathbb{F}_{2^3}\cap\mathbb{F}_{2^2}=\mathbb{F}_{2}$, which is a contradiction.
	
	Assume that $x\in\mathbb{F}_{2^7}$. when $k\not\equiv3\pmod7$, then $7\nmid n$, we have $x\in\mathbb{F}_{2^7}\cap\mathbb{F}_{2^n}=\mathbb{F}_{2}$, which is a  contradiction. If $k\equiv3\pmod7$, we have $x^{2^{k}}=x^8$ since $x\in\mathbb{F}_{2^7}$. Thus we derive from (\ref{3.6.2}) that $x^9(x+1)^9(x^2+x+1)(x^4+x+1)(x^4+x^3+1)(x^4+x^3+x^2+x+1)=0$. This means $x\in\mathbb{F}_{2^4}$, and then $x\in\mathbb{F}_{2^7}\cap\mathbb{F}_{2^4}=\mathbb{F}_{2}$, which is a contradiction.
	
	Assume that $x\in\mathbb{F}_{2^{26}}\backslash\mathbb{F}_{2^{13}}$. Then $x\in\mathbb{F}_{2^{26}}\cap\mathbb{F}_{2^{n}}=\mathbb{F}_{2}$ or $\mathbb{F}_{2^{13}}$ since $n$ is odd, which is a contradiction. Therefore (\ref{3.6.1}) has no solution in $\mathbb{F}_{2^{n}}\setminus{\mathbb{F}_2}$. The proof is completed.
\end{proof}

	\begin{theorem}\label{TH4}
	Let $n$ and $k$ be positive integers with $n=3k+1$. Then 
	$f(x)=x^{2^{2k+1}-2^{k+1}-2^k+1}$
	is a 0-APN function over $\mathbb{F}_{2^{n}}$.
\end{theorem}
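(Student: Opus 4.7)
Following the template of Theorem~\ref{TH6}, my plan is to show that the equation
\[
(x+1)^{d}+x^{d}+1=0,\qquad d=2^{2k+1}-2^{k+1}-2^{k}+1=(2^{k+1}-1)(2^{k}-1),
\]
has no root in $\mathbb{F}_{2^{n}}\setminus\mathbb{F}_{2}$. I would first clear the negative exponents by multiplying through by $x^{2^{k+1}+2^{k}-1}(x+1)^{2^{k+1}+2^{k}-1}$, so that the largest exponent on each side becomes $d+(2^{k+1}+2^{k}-1)=2^{2k+1}$; expanding $(x+1)^{2^{2k+1}}=x^{2^{2k+1}}+1$ then reduces the identity to a polynomial in $x$ whose only exponents involve $1,\,2^{k},\,2^{k+1},\,2^{2k+1}$ and small sums thereof.

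Next I substitute $y=x^{2^{k}}$ and $z=y^{2^{k}}=x^{2^{2k}}$, so that $x^{2^{k+1}}=y^{2}$, $y^{2^{k+1}}=z^{2}$ and $x^{2^{2k+1}}=z^{2}$; the hypothesis $n=3k+1$ yields the closing relation $z^{2^{k+1}}=x$. Under these substitutions the cleared equation becomes a polynomial identity $F(x,y,z)=0$ over $\mathbb{F}_{2}$ of degree only $2$ in $z$. A second identity $G(x,y,z)=0$ of degree $6$ in $z$ is obtained by raising $F$ to the $2^{k}$-th power and applying the Frobenius substitutions $x\mapsto y,\ y\mapsto z,\ z\mapsto w$, where $w^{2}=x$ follows from $z^{2^{k+1}}=x$. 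With $F$ and $G$ in hand I would compute, in MAGMA, $\mathrm{Res}_{z}(F,G)\in\mathbb{F}_{2}[x,y]$ and then, after producing a companion bivariate relation by a further Frobenius twist, a second resultant in $y$, arriving at a univariate polynomial $P(x)\in\mathbb{F}_{2}[x]$ whose vanishing is a necessary condition.

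As in Theorem~\ref{TH6} the final step is to factor $P(x)$ into irreducibles over $\mathbb{F}_{2}$. Each irreducible factor of degree $m$ forces any root $x$ to lie in $\mathbb{F}_{2^{m}}\cap\mathbb{F}_{2^{3k+1}}=\mathbb{F}_{2^{\gcd(m,3k+1)}}$; whenever $\gcd(m,3k+1)=1$ this gives $x\in\mathbb{F}_{2}$, contradicting $x\notin\mathbb{F}_{2}$. For the remaining finitely many residue classes of $k$ modulo $m$, the relation $x^{2^{k}}=x^{2^{k\bmod m}}$ inside $\mathbb{F}_{2^{m}}$ collapses the cleared equation into a much smaller polynomial condition that I intend to simplify by hand to force $x\in\mathbb{F}_{2}$ again. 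The principal obstacle is computational: $F$ and $G$ are dense enough that $P(x)$ will have very large degree and will only be tractable through MAGMA. The secondary obstacle is the case analysis, where the specific shape $n=3k+1$ must be used delicately to rule out each exceptional residue class of $k$ modulo the degree of an irreducible factor of $P$.
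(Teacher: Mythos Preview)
Your proposal is correct and follows essentially the same route as the paper: clear denominators, substitute $y=x^{2^{k}}$, $z=y^{2^{k}}$ with the closing relation $z^{2^{k+1}}=x$, generate companion equations by Frobenius (the paper raises to the $2^{k}$-th and $2^{2k+1}$-th powers to obtain three trivariate equations), eliminate $z$ and then $y$ via resultants in MAGMA, and finish with a case analysis on the irreducible factors (which turn out to have degrees $2$, $7$, and $15$). The only point to sharpen is that your ``companion bivariate relation by a further Frobenius twist'' should come from a \emph{third} trivariate equation before eliminating $z$ twice, rather than by twisting the bivariate resultant $\mathrm{Res}_{z}(F,G)$ directly, since the latter lives in $\mathbb{F}_{2}[x,y]$ and a single Frobenius sends it to $\mathbb{F}_{2}[y,z]$.
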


\begin{proof}
	We need to prove that the equation
	\begin{equation}\label{3.3.1}
		(x+1)^{2^{2k+1}-2^{k+1}-2^k+1}+x^{2^{2k+1}-2^{k+1}-2^k+1}+1=0
	\end{equation}
	has no solution in $\mathbb{F}_{2^{n}}\setminus{\mathbb{F}_2}$. Assume that $x\in\mathbb{F}_{2^{n}}\setminus{\mathbb{F}_2}$ is a solution of (\ref{3.3.1}). Multiplying $x^{2^k+2^{k+1}}(x+1)^{2^k+2^{k+1}}$ on both sides of (\ref{3.3.1}), we have
	\begin{align}
		&x^{2^{2k+1}+2^{k+1}+2^{k}}+x^{2^{k+1}+2^{k}+1}+x^{2^{2k+1}+2^{k+1}+1}+x^{2^{2k+1}+2^{k}+1} \notag\\
		&+x^{2^{2k+1}+1}+x^{2^{k+2}+2^{k+1}}+x^{2^{k+2}+2^{k}}+x^{2^{k+2}}=0.\label{3.3.2}
	\end{align}
	Let $y=x^{2^{k}}$ and $z=y^{2^k}$,  then $z^{2^{k+1}}=x$. Raising the $2^k$-th and $2^{2k+1}$-th powers on  (\ref{3.3.2}) respectively, we obtain
	\begin{align}\label{3.3.3}
		\left\{
		\begin{array}{ll}
			y^3z^2+xy^3+xy^2z^2+xyz^2+xz^2+y^6+y^5+y^4=0	,\\
			xz^3+yz^3+xyz^2+xyz+xy+z^6+z^5+z^4=0	,\\
			x^3y^2+x^3z^2+x^2y^2z^2+xy^2z^2+y^2z^2+x^6+x^5+x^4=0.
		\end{array}
		\right.
	\end{align}
	Computing the resultants of the first and second equations of (\ref{3.3.3}) with respect to $z$, we have
	\begin{align*}
		Res_{1}(x,y) =& y^2(y+1)^2(x^8y^{12}+x^8y^8+x^8y^6+x^8y^4+x^8y^3+x^8+x^7y^{12}+x^{7}y^{10} \\
		&+x^7y^6+x^7y^4+x^6y^{18}+x^6y^{17}+x^6y^{16}+x^6y^{13}+x^6y^{11}+x^6y^8+x^6y^7 \\
		&+x^6y^6+x^5y^{20}+x^5y^{18}+x^5y^{14}+x^5y^{12}+x^4y^{21}+x^4y^{18}+x^4y^{17}+x^4y^{15}  \\
		&+x^4y^{14}+x^4y^{11}+x^3y^{20}+x^3y^{18}+x^3y^{14}+x^3y^{12}+x^2y^{26}+x^2y^{25}+x^2y^{24}  \\
		&+x^2y^{21}+x^2y^{19}+x^2y^{16}+x^2y^{15}+x^2y^{14}+xy^{28}+xy^{26}+xy^{22}+xy^{20}\\
		&+y^{32}+y^{29}+y^{28}+y^{26}+y^{24}+y^{23}+y^{20}).
	\end{align*}
	Similarly, computing the resultants of the first and third equations of (\ref{3.3.3}) with respect to $z$, we have
	\begin{align*}
		Res_{2}(x,y) =& 
		(x^7y^2+x^7y+x^7+x^6y^3+x^6y^2+x^6y+x^6+x^5y^3+x^5y^2+x^5y+x^5+x^4y^4 \\
		&+x^4y^3+x^4y^2+x^3y^6+x^3y^5+x^3y^4+x^2y^8+x^2y^7+x^2y^6+x^2y^5+xy^8\\
		&+xy^7+xy^6+xy^5+y^8+y^7+y^6)^2 .
	\end{align*}
	Computing the resultants of $Res_{1}(x,y)$ and $Res_{2}(x,y)$ with respect to $y$ by MAGMA computation, and then decomposing it into the product of irreducible factors in $\mathbb{F}_{2}[x]$ as	
	\begin{align}\label{3.3.4}
		&x^{124}(x+1)^{124}(x^2+x+1)^8(x^7+x^4+x^3+x^2+1)^2(x^7+x^5+x^2+x+1)^2(x^7+x^5 \notag\\
		&+x^4+x^3+1)^2(x^7+x^6+x^3+x+1)^2(x^7+x^6+x^4+x+1)^2(x^7+x^6+x^5+x^2+1)^2   \notag\\
		&(x^{15}+x^{10}+x^9+x^8+x^4+x^3+x^2+x+1)^4(x^{15}+x^{12}+x^9+x^8+x^5+x^4+x^2 \notag\\
		&+x+1)^2(x^{15}+x^{14}+x^{13}+x^{11}+x^{10}+x^7+x^6+x^3+1)^2(x^{15}+x^{14}+x^{13}\notag\\
		&+x^{12}+x^{11}+x^7+x^6+x^5+1)^4.
	\end{align}
	Thus we assert that  $x\in\mathbb{F}_{2^2}\setminus\mathbb{F}_2$, $x\in\mathbb{F}_{2^7}\setminus\mathbb{F}_2$ or $\mathbb{F}_{2^{15}}\setminus(\mathbb{F}_{2^5}\cup\mathbb{F}_{2^3})$. Suppose $x\in\mathbb{F}_{2^2}$. If $k$ is even, then $n$ is odd, we have $x\in\mathbb{F}_{2^2}\cap\mathbb{F}_{2^n}=\mathbb{F}_{2}$, which is a contradiction. If $k$ is odd, then $n$ is even, we have $x\in\mathbb{F}_{2^2}\cap\mathbb{F}_{2^n}=\mathbb{F}_{2^2}$, and consequently $x^{2^{k}}=x^2, x^{2^{2k}}=x$. Thus we derive from (\ref{3.3.2}) that $x^3(x+1)^3(x^3+x+1)(x^3+x^2+1)=0$. This means $x\in\mathbb{F}_{2^3}$, and then $x\in\mathbb{F}_{2^2}\cap\mathbb{F}_{2^3}=\mathbb{F}_{2}$, which is a contradiction.	
	
	Suppose $x\in\mathbb{F}_{2^7}$. If $k\not\equiv2\pmod7$, then $7\nmid n$, we have $x\in\mathbb{F}_{2^7}\cap\mathbb{F}_{2^n}=\mathbb{F}_{2}$, which is a contradiction. If $k\equiv2\pmod7$, we have $x^{2^{k}}=x^4$ and $x^{2^{2k}}=x^{16}$ since  $x\in\mathbb{F}_{2^7}$. Thus we derive from (\ref{3.3.2}) that $x^{13}(x+1)^{13}(x^6+x^3+1)(x^6+x^4+x^3+x+1)(x^6+x^5+x^3+x^2+1)=0$. This means $x\in\mathbb{F}_{2^6}$, and then $x\in\mathbb{F}_{2^6}\cap\mathbb{F}_{2^7}=\mathbb{F}_{2}$, which is a contradiction.
	
	Suppose $x\in\mathbb{F}_{2^{15}}\setminus(\mathbb{F}_{2^5}\cup\mathbb{F}_{2^3})$. Then $x\in\mathbb{F}_{2^{15}}\cap\mathbb{F}_{2^{n}}=\mathbb{F}_{2}$ or $\mathbb{F}_{2^5}$ since gcd$(n,3)=1$, which contradicts to $x\notin\mathbb{F}_{2^5}\cup\mathbb{F}_{2^3}$. Therefore (\ref{3.3.1}) has no solution in $\mathbb{F}_{2^{n}}\setminus{\mathbb{F}_2}$. The proof is completed.		
\end{proof}

If $k$ (or equivalently, $n$) has some congruence relationship, we get four classes of $0$-APN power functions, which are listed as follows.

	\begin{theorem}\label{TH2}
	Let $n$ and $k$ be positive integers with $n=2k$ and $3\nmid k$. Then $f(x)=x^{3(2^{k}-1)}$	is a 0-APN function over $\mathbb{F}_{2^{n}}$.
\end{theorem}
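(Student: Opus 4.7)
The plan is to adapt the approach of Theorems \ref{TH6} and \ref{TH4} to the congruence $n=2k$. It suffices to show that
\begin{equation*}
(x+1)^{3(2^k-1)}+x^{3(2^k-1)}+1 = 0
\end{equation*}
has no solution in $\mathbb{F}_{2^n}\setminus\mathbb{F}_2$. Since the exponent equals $3\cdot 2^k-3$, I first multiply through by $x^3(x+1)^3$ to clear the negative part and then substitute $y=x^{2^k}$. Using $(x+1)^{2^k}=y+1$, one obtains $(x+1)^{3\cdot 2^k}=(y+1)^3=y^3+y^2+y+1$ and $x^{3\cdot 2^k}=y^3$; after expansion and the cancellations of characteristic~$2$ this produces the bivariate relation
\begin{equation*}
F(x,y):=x^3 y^2+x^3 y+(x^2+x+1)y^3+x^6+x^5+x^4=0.
\end{equation*}

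Because $n=2k$, we have the key identity $y^{2^k}=x^{2^{2k}}=x$. Raising $F(x,y)=0$ to the $2^k$-th power therefore swaps $x$ and $y$ and yields a companion equation $G(x,y):=F(y,x)=0$. Viewing $F$ and $G$ as polynomials in $y$, the resultant $\mathrm{Res}(F,G,y)\in\mathbb{F}_2[x]$ can be computed in MAGMA and factored into irreducibles over $\mathbb{F}_2$. Any $x\in\mathbb{F}_{2^n}\setminus\mathbb{F}_2$ satisfying the original equation must be a root of one of these irreducible factors.

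Each such factor is then eliminated by a subfield argument. For an irreducible factor of degree $m$, a root lying in $\mathbb{F}_{2^{2k}}$ must actually lie in $\mathbb{F}_{2^{\gcd(m,2k)}}$, and the hypothesis $3\nmid k$ yields $\gcd(3,2k)=1$, so any factor whose splitting field is $\mathbb{F}_{2^3}$ contributes only $x\in\mathbb{F}_2$; factors of higher degree divisible by $3$ are pruned by the same mechanism. For the residual cases where the roots genuinely embed in a nontrivial subfield of $\mathbb{F}_{2^{2k}}$ --- for instance $x\in\mathbb{F}_{2^2}$, where $x^2+x+1=0$ kills the $y^3$ coefficient of $F$ and collapses $F(x,y)=0$ to $y^2+y=0$, forcing $y\in\mathbb{F}_2$ and hence $x\in\mathbb{F}_2$; or the diagonal case $y=x$ (equivalent to $x\in\mathbb{F}_{2^k}$), where a short computation gives $F(x,x)=x^3(x+1)^3=0$ and hence $x\in\mathbb{F}_2$ directly --- I would substitute the corresponding Frobenius identity into $F(x,y)=0$, obtain an auxiliary univariate equation, and intersect its root set with the relevant subfield to force $x\in\mathbb{F}_2$.

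The main obstacle is the resultant step: as in Theorems \ref{TH6} and \ref{TH4}, the resulting polynomial in $x$ is likely to have large degree with many irreducible factors of moderate degree, and each must be checked individually. Verifying that $3\nmid k$ is precisely what is needed to exclude the degree-$3$ (and any degree-$3j$) obstructions --- and that no further congruence on $k$ is required --- will be the delicate part, since it depends on the precise MAGMA factorization output and cannot be predicted a priori.
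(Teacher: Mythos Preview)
Your approach is essentially identical to the paper's: form $F(x,y)=0$, swap $x$ and $y$ via the $2^k$-th Frobenius to get $G=F(y,x)=0$, and eliminate $y$ via the resultant. The actual resultant is much tamer than you fear---it factors as $x^9(x+1)^9$ times three irreducible sextics over $\mathbb{F}_2$, whose roots lie in $\mathbb{F}_{2^6}\setminus(\mathbb{F}_{2^3}\cup\mathbb{F}_{2^2})$; since $3\nmid k$ gives $\gcd(6,2k)=2$, these roots meet $\mathbb{F}_{2^n}$ only in $\mathbb{F}_{2^2}$, so the proof finishes in one line without any of the auxiliary $\mathbb{F}_{2^2}$ or diagonal case analyses you anticipate.
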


\begin{proof}
	It suffices to show that the eqution
	\begin{equation}\label{3.5.1}
		(x+1)^{3(2^{k}-1)}+x^{3(2^{k}-1)}+1=0
	\end{equation}
	has no solution in $\mathbb{F}_{2^{n}}\setminus{\mathbb{F}_2}$. Suppose that $x\in \mathbb{F}_{2^{n}}\setminus{\mathbb{F}_2}$ is a solution of  (\ref{3.5.1}). Multiplying $x^3(x+1)^3$ on both sides of (\ref{3.5.1}), then (\ref{3.5.1}) becomes
	\begin{equation}\label{3.5.2}
		x^{2^{k+1}+3}+x^{2^{k}+3}+x^{3\cdot2^{k}+2}+x^{3\cdot2^{k}+1}+x^{3\cdot2^{k}}+x^6+x^5+x^4=0.
	\end{equation}
	Let $y=x^{2^{k}}$, then $y^{2^{k}}=x^{2^{2k}}=x$. Consequently, (\ref{3.5.2}) can be rewritten as
	\begin{equation}\label{3.5.3}
		x^3y^2+x^3y+x^2y^3+xy^3+y^3+x^6+x^5+x^4=0.
	\end{equation}
	Raising the $2^{k}$-th power on (\ref{3.5.3}) gives
	\begin{equation}\label{3.5.4}
		x^2y^3+xy^3+x^3y^2+x^3xy+x^3+y^6+y^5+y^4=0.
	\end{equation}
	Computing the resultant of (\ref{3.5.3}) and (\ref{3.5.4}) with respect to $y$, and the resultant can be decomposed into the product of irreducible factors in $\mathbb{F}_{2}[x]$ as
	\begin{equation}\label{3.5.5}
	Res(x)=x^9(x+1)^9(x^6+x^3+1)(x^6+x^4+x^3+x+1)(x^6+x^5+x^3+x^2+1)
	\end{equation}
	by MAGMA. Note that $x\notin \mathbb{F}_{2}$, $Res(x)=0$ implies $x^6+x^3+1=0$, $x^6+x^4+x^3+x+1=0$ or $x^6+x^5+x^3+x^2+1=0$. Without loss of generality, if $x$ satisfies $x^6+x^3+1=0$, $x\in\mathbb{F}_{2^{6}}\setminus(\mathbb{F}_{2^3}\cup\mathbb{F}_{2^2})$. Since $n=2k$ and $3\nmid k$,  $x\in\mathbb{F}_{2^6}\cap\mathbb{F}_{2^n}=\mathbb{F}_{2^2}$, which is a contradiction. This completes the proof.
\end{proof}

\begin{theorem}\label{TH7}
	Let $n$ and $k$ be positive integers with $n=2k+1$ and $k\not\equiv2\pmod{5}$. Then 
	$f(x)=x^{5(2^k+2^{k-1}+1)}$
	is a 0-APN function over $\mathbb{F}_{2^{n}}$.
\end{theorem}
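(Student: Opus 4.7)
The plan is to follow the resultant-elimination strategy used in Theorems \ref{TH6} and \ref{TH4}. It suffices to show that
\[(x+1)^{D}+x^{D}+1 = 0, \qquad D = 5(2^{k}+2^{k-1}+1),\]
has no solution in $\mathbb{F}_{2^{n}}\setminus\mathbb{F}_{2}$. The natural substitution is $y = x^{2^{k-1}}$, which satisfies $x^{2^{k}}=y^{2}$ and, because $n=2k+1$, $y^{2^{k+2}}=x^{2^{2k+1}}=x$. Using the characteristic-two Frobenius,
\[(x+1)^{D} = (y^{2}+1)^{5}(y+1)^{5}(x+1)^{5} = (y+1)^{15}(x+1)^{5}, \qquad x^{D} = x^{5}y^{15},\]
so the equation to rule out becomes
\[F(x,y) := (y+1)^{15}(x+1)^{5} + x^{5}y^{15} + 1 = 0. \qquad (*)\]

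To eliminate $y$, I would raise $(*)$ to the $2^{k+2}$-th power. Under this Frobenius, $y \mapsto x$ and $x \mapsto y^{8}$, and $(y^{8}+1)^{5}=(y+1)^{40}$, so one obtains a companion equation
\[G(x,y) := (x+1)^{15}(y+1)^{40} + x^{15}y^{40} + 1 = 0. \qquad (**)\]
Computing $R(x) = \mathrm{Res}(F,G,y) \in \mathbb{F}_{2}[x]$ in MAGMA and factoring it into irreducibles over $\mathbb{F}_{2}$, any putative solution $x \in \mathbb{F}_{2^{n}} \setminus \mathbb{F}_{2}$ must be a root of one of the non-trivial irreducible factors of $R$.

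The remaining step is the case analysis. Each irreducible factor of $R$ of degree $d$ confines $x$ to $\mathbb{F}_{2^{d}} \cap \mathbb{F}_{2^{n}} = \mathbb{F}_{2^{\gcd(d,n)}}$. Because $n=2k+1$ is odd, factors of even degree immediately force $x\in\mathbb{F}_{2}$. The hypothesis $k\not\equiv 2\pmod 5$ is equivalent to $5\nmid n$ (since $n=2k+1$), so factors whose relevant subfield is $\mathbb{F}_{2^{5}}$, or of degree a multiple of $5$ coprime to the remaining odd-prime divisors of $n$, also collapse to $\mathbb{F}_{2}$. For any leftover irreducible factor whose degree shares a common divisor with $n$, I would follow the technique of Theorem \ref{TH6}: substitute the induced subfield identity $y = x^{2^{k-1}\bmod d}$ back into $(*)$, obtaining a low-degree univariate polynomial in $x$ whose roots can be verified directly (or with MAGMA) to lie in $\mathbb{F}_{2}$, yielding the desired contradiction.

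The main obstacle will be the size and complexity of $R(x)$: in Theorems \ref{TH6} and \ref{TH4} the corresponding resultants have many irreducible factors of mixed degree, and the present case will almost certainly be similar. The delicate point is that the surviving subfield after the generic intersection argument must be exactly $\mathbb{F}_{2^{5}}$, so that $k\not\equiv 2\pmod 5$ is precisely the right hypothesis; any irreducible factor of degree $5d$ with $\gcd(d,n)>1$ for particular residues of $k$ would need to be eliminated separately by the reinsertion argument above. The bookkeeping—identifying the factors, matching their degrees against divisors of $n$, and verifying that each surviving case is killed by either the congruence condition or a direct computation in $(*)$—constitutes the bulk of the work.
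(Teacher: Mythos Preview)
Your approach is correct and essentially the same as the paper's. The paper first squares the equation and then sets $y=x^{2^{k}}$, whereas you set $y=x^{2^{k-1}}$ directly; since $(y_{\text{paper}})=(y_{\text{yours}})^{2}$, your $F$ and $G$ are exactly the square roots of the paper's two bivariate equations, so the resultants have the same root set and the ensuing case analysis (even-degree irreducible factors killed by $n$ odd, the degree-$5$ factors killed by $5\nmid n$, no reinsertion step actually needed) is identical.
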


\begin{proof}
	It suffices to show that the eqution
	\begin{equation}\label{3.7.1}
		(x+1)^{5(2^k+2^{k-1}+1)}+x^{5(2^k+2^{k-1}+1)}+1=0
	\end{equation}
	has no solution in $\mathbb{F}_{2^{n}}\setminus{\mathbb{F}_2}$. Assume that $x\in\mathbb{F}_{2^{n}}\setminus{\mathbb{F}_2}$ is a solution of (\ref{3.7.1}). Raising the square on (\ref{3.7.1}), we have
	\begin{equation}\label{3.7.2}
		(x^{10}+x^8+x^2+1)(x^{5\cdot2^{k+1}}+x^{2^{k+3}}+x^{2^{k+1}}+1)(x^{5\cdot2^{k}}+x^{2^{k+2}}+x^{2^{k}}+1)+x^{5(2^{k+1}+2^{k}+2)}+1=0.
	\end{equation}
	Let $y=x^{2^{k}}$, then $y^{2^{k+1}}=x$. (\ref{3.7.2}) can be written as
	\begin{equation}\label{3.7.3}
		(x^{10}+x^8+x^2+1)(y^{10}+y^8+y^2+1)(y^{5}+y^4+y+1)+x^{10}y^{15}+1=0.
	\end{equation}
	Raising the $2^{k+1}$-th power on both sides of (\ref{3.7.3}) gives
	\begin{equation}\label{3.7.4}
		(y^{20}+y^{16}+y^4+1)(x^{10}+x^8+x^2+1)(x^{5}+x^4+x+1)+x^{15}y^{20}+1=0.
	\end{equation}
	Computing the resultant of (\ref{3.7.3}) and (\ref{3.7.4}) with respect to $y$, and then decomposing it into the product of irreducible factors in $\mathbb{F}_{2}[x]$ as
	\begin{align}\label{3.7.5}
		&x^5(x+1)^5(x^2+x+1)^4(x^4+x+1)(x^4+x^3+1)(x^4+x^3+x^2+x+1)(x^5+x^2+1)  \notag\\
		&(x^5+x^3+1)(x^5+x^3+x^2+x+1) (x^5+x^4+x^2+x+1)(x^5+x^4+x^3+x+1) \notag\\
		&(x^5+x^4+x^3+x^2+1)(x^{14}+x^6+x^4+x^3+x^2+x+1)(x^{14}+x^9+x^8+x^5+x^4 \notag\\
		&+x^3+x^2+x+1)(x^{14}+x^{11}+x^{10}+x^5+x^4+x^3+x^2+x+1)(x^{14}+x^{11}\notag\\
		&+x^{10}+x^6+x^4+x^2+1)(x^{14}+x^{12}+x^{10}+x^8+x^4+x^3+1)(x^{14}+x^{12}\notag\\
		&+x^{10}+x^9+x^8+x^6+x^5+x^4+x^3+x^2+1)(x^{14}+x^{12}+x^{11}+x^{10}+x^9 \notag\\
		&+x^8+x^4+x^3+x^2+x+1)(x^{14}+x^{12}+x^{11}+x^{10}++x^9+x^8+x^6+x^5+x^4+x^2+1)\notag\\ &(x^{14}+x^{13}+x^{12}+x^{11}+x^{10}+x^6+x^5+x^4+x^3+x^2+1)(x^{14}+x^{13}+x^{12}+x^{11}   \notag\\
		&+x^{10}+x^8+1)(x^{14}+x^{13}+x^{12}+x^{11}+x^{10}+x^9+x^4+x^3+1)(x^{14}+x^{13}+x^{12}+x^{11}   \notag\\
		&+x^{10}+x^9+x^6+x^5+1)(x^{48}+x^{46}+x^{45}+x^{39}+x^{37}+x^{35}+x^{32}+x^{28}+x^{26}+x^{24}+x^{22}  \notag\\
		&+x^{20}+x^{16}+x^{14}+x^{11}+x^9+x^7+x^6+x^3+x^2+1)(x^{48}+x^{46}+x^{45}+x^{42}+x^{41}+x^{39}  \notag\\
		&+x^{37}+x^{34}+x^{32}+x^{28}+x^{26}+x^{24}+x^{22}+x^{20}+x^{16}+x^{13}+x^{11}+x^9+x^3+x^2+1)   \notag\\
		&(x^{48}+x^{47}+x^{45}+x^{44}+x^{42}+x^{41}+x^{40}+x^{37}+x^{36}+x^{33}+x^{31}+x^{30}+x^{29}+x^{27}+ \notag\\
		&x^{24}+x^{23}+x^{22}+x^{20}+x^{17}+x^{15}+x^{13}+x^{12}+x^{11}+x^{10}+x^9+x^7+x^5+x^4+x^2+x+1)  \notag\\
		&(x^{48}+x^{47}+x^{46}+x^{44}+x^{43}+x^{41}+x^{39}+x^{38}+x^{37}+x^{36}+x^{35}+x^{33}+x^{31}+x^{28}+\notag\\
		&x^{26}+x^{25}+x^{24}+x^{21}+x^{19}+x^{18}+x^{17}+x^{15}+x^{12}+x^{11}+x^{8}+x^7+x^6+x^4+x^3+x+1).
	\end{align}
	Note that $x\notin\mathbb{F}_{2}$, we assert that $x\in\mathbb{F}_{2^2}$, $x\in\mathbb{F}_{2^4}$, $x\in\mathbb{F}_{2^5}$, $x\in\mathbb{F}_{2^{14}}\backslash\mathbb{F}_{2^{7}}$ or $x\in\mathbb{F}_{2^{48}}\backslash\mathbb{F}_{2^{24}}$.
	
	Assume that $x\in\mathbb{F}_{2^2}$, We have $x\in\mathbb{F}_{2^2}\cap\mathbb{F}_{2^n}=\mathbb{F}_{2}$ since $n$ is odd, which is a contradiction. Similarly, we get $x\notin \mathbb{F}_{2^4}$.
	
	Assume that $x\in\mathbb{F}_{2^5}$. Note that $n=2k+1$ and $k\not\equiv2\pmod{5}$, then $5\nmid n$. We have $x\in\mathbb{F}_{2^5}\cap\mathbb{F}_{2^n}=\mathbb{F}_{2}$, which is a contradiction.
	
	Assume that  $x\in\mathbb{F}_{2^{14}}\backslash\mathbb{F}_{2^{7}}$. We have
	$\mathbb{F}_{2^{14}}\cap\mathbb{F}_{2^n}=\mathbb{F}_{2}$ or $\mathbb{F}_{2^7}$ since $n$ is odd, which is a contradiction. Similarly, we get $x\notin \mathbb{F}_{2^{48}}\backslash\mathbb{F}_{2^{24}}$. 
	Therefore (\ref{3.7.1}) has no solution in $\mathbb{F}_{2^{n}}\setminus{\mathbb{F}_2}$. The proof is completed.
\end{proof}	

\begin{theorem}\label{TH3}
	Let $n$ and $k$ be positive integers with $n=2k+1$ and $k\not\equiv13\pmod{27}$. Then $f(x)=x^{3(2^k-1)}$ is a 0-APN function over $\mathbb{F}_{2^{n}}$.
\end{theorem}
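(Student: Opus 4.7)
My plan is to follow the resultant strategy used in Theorems \ref{TH6} and \ref{TH2}. Assume for contradiction that $x \in \mathbb{F}_{2^n}\setminus \mathbb{F}_2$ satisfies $(x+1)^{3(2^k-1)} + x^{3(2^k-1)} + 1 = 0$. Multiplying through by $x^3(x+1)^3$ clears the negative contribution in $3(2^k-1) = 3\cdot 2^k - 3$, and the same expansion as in the proof of Theorem \ref{TH2} yields
\[
x^{2^{k+1}+3}+x^{2^k+3}+x^{3\cdot 2^k+2}+x^{3\cdot 2^k+1}+x^{3\cdot 2^k}+x^6+x^5+x^4 = 0. \quad (\diamond)
\]

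Next, set $y = x^{2^k}$. The key distinction from Theorem \ref{TH2} is that here $n = 2k+1$ is odd, so $x^{2^{2k+1}} = x$ gives $y^{2^{k+1}} = x$ (rather than $y^{2^k} = x$). Substituting into $(\diamond)$ produces
\[
E_1(x,y):\quad x^3 y^2 + x^3 y + x^2 y^3 + x y^3 + y^3 + x^6 + x^5 + x^4 = 0.
\]
Raising $E_1$ to the $2^{k+1}$-th power and using $x^{2^{k+1}} = y^2$ and $y^{2^{k+1}} = x$, I obtain a second bivariate equation
\[
E_2(x,y):\quad x^2 y^6 + x y^6 + x^3 y^4 + x^3 y^2 + x^3 + y^{12} + y^{10} + y^8 = 0.
\]

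I would then compute $\mathrm{Res}(E_1, E_2, y) \in \mathbb{F}_2[x]$ with MAGMA and factor it into $\mathbb{F}_2$-irreducibles. Any putative nontrivial $x$ must be a root of one of these factors and hence lie in $\mathbb{F}_{2^d}$ where $d$ is the degree of that factor. For each factor of small degree $d$, the intersection $\mathbb{F}_{2^d}\cap \mathbb{F}_{2^{2k+1}}$ collapses to $\mathbb{F}_{2^{\gcd(d,2k+1)}}$; since $n$ is odd, factors of even degree are excluded immediately, and factors of small odd degree (such as $3$, $7$, or $9$) are handled either by a direct coprimality argument or by substituting the appropriate Frobenius identity $x^{2^k}=x^{2^j}$ back into $(\diamond)$ and deriving $x\in\mathbb{F}_2$, exactly as in the $\mathbb{F}_{2^3}$ and $\mathbb{F}_{2^7}$ cases of the proof of Theorem \ref{TH6}.

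The main obstacle, and the source of the hypothesis $k \not\equiv 13 \pmod{27}$, is an anticipated irreducible factor whose roots live in $\mathbb{F}_{2^{27}}$. Observe that $2k+1 \equiv 0 \pmod{27}$ iff $k \equiv 13 \pmod{27}$, so the hypothesis is equivalent to $27 \nmid n$, i.e.\ $\gcd(27,n)\in\{1,3,9\}$. A root of an $\mathbb{F}_2$-irreducible polynomial of degree $27$ lies in $\mathbb{F}_{2^{27}}$ but in no proper subfield, so it cannot lie in $\mathbb{F}_{2^{\gcd(27,n)}}\subsetneq \mathbb{F}_{2^{27}}$; this excludes the degree-$27$ factor under the stated hypothesis. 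Verifying that the resultant factors with exactly the structure that makes this final case work is the delicate computational step, but the overall template is identical to that of Theorems \ref{TH2}, \ref{TH6}, and \ref{TH7}.
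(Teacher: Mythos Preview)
Your approach is correct and follows essentially the same resultant strategy as the paper. The paper makes one small simplification you may wish to adopt: it first squares $(\diamond)$ and sets $y=x^{2^{k+1}}$ (so that $y^{2^k}=x$ and the second equation has $y$-degree~$6$ rather than~$12$), after which the resultant factors over $\mathbb{F}_2$ simply as $x^{9}(x+1)^{9}$ times two irreducible degree-$27$ polynomials---so the only nontrivial case is precisely the $\mathbb{F}_{2^{27}}$ case you anticipated, and no auxiliary small-degree substitutions back into $(\diamond)$ are required.
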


\begin{proof}
	It suffices to show that the eqution
	\begin{equation}\label{3.2.1}
		(x+1)^{3(2^k-1)}+x^{3(2^k-1)}+1=0
	\end{equation}
	has no solution in $\mathbb{F}_{2^{n}}\setminus{\mathbb{F}_2}$. Suppose that $x\in\mathbb{F}_{2^{n}}\setminus{\mathbb{F}_2}$ is a solution of (\ref{3.2.1}). Multiplying $x^3(x+1)^3$ on both sides of (\ref{3.2.1}), then (\ref{3.2.1}) becomes
	\begin{equation}\label{3.2.2}
		x^{2^{k+1}+3}+x^{2^k+3}+x^{3\cdot 2^k+2}+x^{3\cdot 2^k+1}+x^{3\cdot 2^k}+x^6+x^5+x^4=0.
	\end{equation}
	Raising the square on (\ref{3.2.2}) leads to 
	\begin{equation}\label{3.2.3}
		x^{2^{k+2}+6}+x^{2^{k+1}+6}+x^{3\cdot 2^{k+1}+4}+x^{3\cdot 2^{k+1}+2}+x^{3\cdot 2^{k+1}}+x^{12}+x^{10}+x^8=0.
	\end{equation}
	Let $y=x^{2^{k+1}}$, then $y^{2^{k}}=x$, (\ref{3.2.3}) can be written as
	\begin{equation}\label{3.2.4}
	x^6y^2+x^6y+x^4y^3+x^2y^3+y^3+x^{12}+x^{10}+x^8=0.
	\end{equation}
	Raising the $2^{k}$-th power on both sides of (\ref{3.2.4}) gives
	\begin{equation}\label{3.2.5}
		x^2y^3+xy^3+x^3y^2+x^3y+x^3+y^{6}+y^{5}+y^4=0.
	\end{equation}
	Computing the resultant of (\ref{3.2.4}) and (\ref{3.2.5}) with respect to $y$, and the resultant can be decomposed into the product of irreducible factors in $\mathbb{F}_{2}[x]$ as
	\begin{align}
		Res(x)=&x^{9}(x+1)^{9}(x^{27}+x^{22}+x^{20}+x^{19}+x^{15}+x^{13}+x^{12}+x^9+x^5+x^3+x^2+x+1)  \notag\\
		&(x^{27}+x^{26}+x^{25}+x^{24}+x^{22}+x^{18}+x^{15}+x^{14}+x^{12}+x^8+x^7+x^5+1)  \label{3.2.6} 
	\end{align}
by MAGMA. Note that $x\notin \mathbb{F}_{2}$, $Res(x)=0$ implies $x^{27}+x^{22}+x^{20}+x^{19}+x^{15}+x^{13}+x^{12}+x^9+x^5+x^3+x^2+x+1=0$ or $x^{27}+x^{26}+x^{25}+x^{24}+x^{22}+x^{18}+x^{15}+x^{14}+x^{12}+x^8+x^7+x^5+1=0$. Hence  $x\in\mathbb{F}_{2^{27}}\setminus\mathbb{F}_{2^9}$. Moreover, $x\in\mathbb{F}_{2^{27}}\cap\mathbb{F}_{2^n}=\mathbb{F}_{2^{\gcd(n,27)}}=\mathbb{F}_{2}$, $\mathbb{F}_{2^3}$  or $\mathbb{F}_{2^9}$ since $n$ is odd and $27\nmid n$, which contradicts to $x\notin\mathbb{F}_{2^9}$. Hence, (\ref{3.2.1}) has no solution in $\mathbb{F}_{2^{n}}\setminus{\mathbb{F}_2}$. We complete the proof.
	
\end{proof}

	\begin{theorem}\label{TH5}
		Let $n$ and $k$ be positive integers with  $n=3k+1$ and $k\not\equiv9\pmod{14}$. Then 
		$f(x)=x^{3(2^{k+1}+1)}$
		is a 0-APN function over $\mathbb{F}_{2^{n}}$.
	\end{theorem}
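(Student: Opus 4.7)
The plan is to mimic the resultant--elimination strategy used in Theorem \ref{TH4}, tailored to the exponent $d=3(2^{k+1}+1)=3\cdot 2^{k+1}+3$. It suffices to show that the equation
\begin{equation*}
(x+1)^{d}+x^{d}+1=0
\end{equation*}
has no solution in $\mathbb{F}_{2^{n}}\setminus\mathbb{F}_{2}$. Using $(x+1)^{2^{k+1}}=x^{2^{k+1}}+1$ and $(x+1)^{3}=x^{3}+x^{2}+x+1$, I would substitute $y=x^{2^{k+1}}$ to rewrite this as $(y+1)^{3}(x+1)^{3}+x^{3}y^{3}+1=0$, which in characteristic two expands to a symmetric, bidegree-$(3,3)$ polynomial relation $F(x,y)=0$; call this equation $(1)$.

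Since $n=3k+1$, setting $w:=y^{2^{k+1}}=x^{2^{2k+2}}$ and applying the Frobenius $u\mapsto u^{2^{k+1}}$ to $(1)$ yields $F(y,w)=0$. Applying it once more, together with the identity $x^{2^{3k+3}}=x^{4}$ that comes from $x^{2^{n}}=x$, produces $F(w,x^{4})=0$. Denote these equations $(2)$ and $(3)$. I would then delegate to MAGMA the computation $R_{1}(x,w):=\mathrm{Res}(F(x,y),F(y,w),y)\in\mathbb{F}_{2}[x,w]$, followed by $R(x):=\mathrm{Res}(R_{1}(x,w),F(w,x^{4}),w)\in\mathbb{F}_{2}[x]$, and factor $R(x)$ into irreducibles over $\mathbb{F}_{2}$.

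The conclusion proceeds by subfield analysis: any solution of the equation outside $\mathbb{F}_{2}$ produces a root of $R(x)$, and hence lies in $\mathbb{F}_{2^{m}}\cap\mathbb{F}_{2^{n}}=\mathbb{F}_{2^{\gcd(m,3k+1)}}$ for the degree $m$ of the relevant irreducible factor. Small-degree factors (typically $m\in\{1,2,3,6,7\}$) are ruled out as in Theorems \ref{TH4} and \ref{TH7}: the gcd either collapses to $1$, giving $x\in\mathbb{F}_{2}$, or in borderline cases where a small subfield does intersect $\mathbb{F}_{2^{n}}$ nontrivially, substituting the candidate back into $(1)$ forces $x\in\mathbb{F}_{2}$. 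The principal obstacle I anticipate is a degree-$14$ factor: the hypothesis $k\not\equiv 9\pmod{14}$ is precisely equivalent to $14\nmid 3k+1$ (since $3\cdot 5\equiv 1\pmod{14}$ gives $3^{-1}\equiv 5$, whence $14\mid 3k+1$ iff $k\equiv 13\cdot 5\equiv 9\pmod{14}$), so under the hypothesis $\gcd(14,n)\in\{1,2,7\}$ and the degree-$14$ factor is defeated by the same subfield reductions.
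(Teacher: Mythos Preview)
Your proposal is correct and follows essentially the same approach as the paper: three-variable resultant elimination followed by subfield analysis, with the degree-$14$ obstruction handled precisely by the hypothesis $k\not\equiv 9\pmod{14}$. The only cosmetic difference is that the paper substitutes $y=x^{2^{k}}$, $z=y^{2^{k}}$ (so its equations involve $y^{2},y^{4},y^{6}$) and eliminates in the order $z$ then $y$, whereas you take $y=x^{2^{k+1}}$ directly and eliminate $y$ then $w$; the resulting irreducible-degree list $\{1,2,3,14\}$ and the case analysis (including the back-substitution for $\mathbb{F}_{2^{2}}$ when $k$ is odd) are the same.
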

	
	\begin{proof}
		We only show that the eqution
	\begin{equation}\label{3.4.1}
		(x+1)^{3(2^{k+1}+1)}+x^{3(2^{k+1}+1)}+1=0
	\end{equation}
	has no solution in $\mathbb{F}_{2^{n}}\setminus{\mathbb{F}_2}$. Assume that $x\in\mathbb{F}_{2^{n}}\setminus{\mathbb{F}_2}$ is a solution of (\ref{3.4.1}), then $x$ satisfies
	\begin{align}
		&x^{2^{k+2}+3}+x^{2^{k+1}+3}+x^3+x^{3\cdot2^{k+1}+2}+x^{2^{k+2}+2}+x^{2^{k+1}+2}+x^2+x^{3\cdot2^{k+1}+1}  \notag\\
		&+x^{2^{k+2}+1}+x^{2^{k+1}+1}+x+x^{3\cdot2^{k+1}}+x^{2^{k+2}}+x^{2^{k+1}}=0.\label{3.4.2}
	\end{align}
		Let $y=x^{2^{k}}$ and $z=y^{2^k}$, then $z^{2^{k+1}}=x$. Raising the $2^k$-th and $2^{2k}$-th powers on both sides of (\ref{3.4.2}) respectively, we have 
\begin{align}\label{3.4.3}
	\left\{
	\begin{array}{ll}
		x^3y^4+x^3y^2+x^3+x^2y^6+x^2y^4+x^2y^2+x^2+xy^6+xy^4+xy^2+x+y^6+y^4+y^2=0,\\
		y^3z^4+y^3z^2+y^3+y^2z^6+y^2y^4+y^2z^2+y^2+yz^6+yz^4+yz^2+y+z^6+z^4+z^2=0,\\
		x^2z^3+xz^3+z^3+x^3z^2+x^2z^2+xz^2+z^2+x^3z+x^2z+xz+z+x^3+x^2+x=0.
	\end{array}
	\right.
\end{align}
Computing the resultant of the second and third equations of (\ref{3.4.3}) with respect to $z$,  we obtain
	\begin{align*}
	Res(x,y) =& (x^2+y)^3(x^{12}y^6+x^{12}y^5+x^{12}y^3+x^{12}y^2+x^{10}y^6+x^{10}y^5+x^{10}y^3+x^{10}y\\
	&+x^{8}y^4+x^{8}y^3+x^{8}+x^6y^6+x^6y^5+x^6y^4+x^6y^3+x^6y^2+x^6y+x^6+x^4y^6 \\
	&+x^4y^3+x^4y^2+x^2y^5+x^2y^3+x^2y+x^2+y^4+y^3+y+1).
\end{align*}	
Computing the resultant of	the first equation of (\ref{3.4.3}) and the above $Res(x,y)$ with respect to $y$, the resultant can be decomposed into the product of irreducible factors in $\mathbb{F}_{2}[x]$ as
\begin{align*}\label{3.4.4}
        &x^{5}(x+1)^{5}(x^2+x+1)^{3}(x^3+x+1)^{12}(x^3+x^2+1)^{12}(x^{14}+x^{10}+x^9\notag\\
        &+x^7+x^6+x^4+x^3+x+1)(x^{14}+x^{12}+x^9+x^8+x^7+x^6+x^5+x^2+1)(x^{14}+x^{13}\notag\\
        &+x^{11}+x^{10}+x^8+x^7+x^5+x^4+1).    
\end{align*}
Note that $x\notin\mathbb{F}_{2}$, we assert that $x\in \mathbb{F}_{2^2}$, $\mathbb{F}_{2^3}$ or 
$\mathbb{F}_{2^{14}}\backslash\mathbb{F}_{2^7}$.

Suppose $x\in \mathbb{F}_{2^2}$. If $k$ is even, then $n$ is odd, we have $x\in\mathbb{F}_{2^2}\cap\mathbb{F}_{2^n}=\mathbb{F}_{2}$, which contradicts to $x\notin\mathbb{F}_{2}$. If $k$ is odd, we have $x^{2^{k+1}}=x$ since $x\in \mathbb{F}_{2^2}$. Thus we derive from (\ref{3.4.1}) that $x^2(x+1)^2=0$. This means $x\in\mathbb{F}_{2}$, which is a contradiction.	

Suppose $x\in \mathbb{F}_{2^3}$. We have $x\in\mathbb{F}_{2^3}\cap\mathbb{F}_{2^n}=\mathbb{F}_{2}$ since $n=3k+1$, which is a contradiction.

Suppose that $x\in\mathbb{F}_{2^{14}}\backslash\mathbb{F}_{2^7}$. Since $n=3k+1$ and $k\not\equiv9\pmod{14}$, $14\nmid n$, and consequently $x\in\mathbb{F}_{2^{14}}\cap\mathbb{F}_{2^{n}}=\mathbb{F}_{2}$ or $\mathbb{F}_{2^7}$, which is a contradiction. Therefore  (\ref{3.4.1}) has no solution in $\mathbb{F}_{2^{n}}\setminus{\mathbb{F}_2}$. The proof is completed.
	
 \end{proof}

At the end of this section, we propose a class of $0$-APN power function over $\mathbb{F}_{2^{n}}$ that $d$ is a constant.

\begin{theorem}\label{TH1}
	The power function $f(x)=x^{-9}$ is a 0-APN function over $\mathbb{F}_{2^{n}}$ if and only if $9\nmid n$.
\end{theorem}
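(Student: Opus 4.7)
The plan is to translate the defining equation $(x+1)^{-9}+x^{-9}+1=0$ into a univariate polynomial equation via the substitution $y=x^2+x$, verify that the resulting polynomial is irreducible over $\mathbb{F}_2$, and then settle both directions of the equivalence by a subfield intersection argument. For $x\in\mathbb{F}_{2^n}\setminus\mathbb{F}_2$, multiplying the equation by $x^9(x+1)^9$ yields
\[
x^9+(x+1)^9+x^9(x+1)^9=0.
\]
Since $(x+1)^9=x^9+x^8+x+1$ in characteristic $2$, the first two terms sum to $x^8+x+1$, while $x^9(x+1)^9=(x^2+x)^9$. Setting $y=x^2+x$ and using $x^8+x=(x^2+x)^4+(x^2+x)^2+(x^2+x)=y^4+y^2+y$, the equation collapses to
\[
g(y):=y^9+y^4+y^2+y+1=0.
\]

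The next step is to show that $g(y)$ is irreducible over $\mathbb{F}_2$. Since $g(0)=g(1)=1$, there are no linear factors, and direct reductions modulo $y^2+y+1$, $y^3+y+1$, and $y^3+y^2+1$ rule out irreducible factors of degree $2$ or $3$. As $\deg g=9$, the only remaining possible decomposition into irreducibles is $9=4+5$, so it suffices to check that none of the three degree-$4$ irreducibles $y^4+y+1$, $y^4+y^3+1$, $y^4+y^3+y^2+y+1$ divides $g$; this is a short reduction calculation. Consequently every root of $g$ lies in $\mathbb{F}_{2^9}$ and in no proper subfield, and for any root $y_0$ the nine Galois conjugates $\{y_0^{2^i}\}_{i=0}^{8}$ are precisely the roots of $g$, so
\[
\mathrm{Tr}_{\mathbb{F}_{2^9}/\mathbb{F}_2}(y_0)=y_0+y_0^2+\cdots+y_0^{2^8}=0,
\]
since the coefficient of $y^8$ in $g$ is $0$.

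With these facts in hand, both directions follow cleanly. If $9\nmid n$, then $\mathbb{F}_{2^9}\cap\mathbb{F}_{2^n}=\mathbb{F}_{2^{\gcd(9,n)}}\subseteq\mathbb{F}_{2^3}$, and the irreducible $g$ has no root in $\mathbb{F}_{2^3}$, so $g(y)=0$ has no solution in $\mathbb{F}_{2^n}$; reversing the derivation shows $(x+1)^{-9}+x^{-9}+1=0$ has no solution in $\mathbb{F}_{2^n}\setminus\mathbb{F}_2$, so $f$ is $0$-APN. Conversely, if $9\mid n$, pick any root $y_0\in\mathbb{F}_{2^9}\subseteq\mathbb{F}_{2^n}$ of $g$; by trace transitivity $\mathrm{Tr}_{\mathbb{F}_{2^n}/\mathbb{F}_2}(y_0)=(n/9)\cdot\mathrm{Tr}_{\mathbb{F}_{2^9}/\mathbb{F}_2}(y_0)=0$, so $x^2+x=y_0$ has a solution $x_0\in\mathbb{F}_{2^n}$, necessarily outside $\mathbb{F}_2$ since $y_0\neq 0$. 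Then $x_0$ solves the original equation, so $f$ is not $0$-APN. The main technical hurdle is establishing the irreducibility of $g$; once it is in hand, the subfield argument delivers both directions at once, and the proof ends up noticeably cleaner than the earlier theorems in the paper because $g$ turns out to be irreducible rather than a product with many factors to sift through.
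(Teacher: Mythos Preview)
Your proof is correct, but it proceeds along a different line from the paper. The paper stays in the variable $x$ and factors the degree-$18$ equation directly as $(x^9+x+1)(x^9+x^8+1)=0$, observing that both factors are irreducible over $\mathbb{F}_2$; the subfield intersection argument then handles the direction $9\nmid n$, while the converse is immediate because the nine roots of $x^9+x+1$ lie in $\mathbb{F}_{2^9}\subseteq\mathbb{F}_{2^n}$ and are visibly not in $\mathbb{F}_2$. You instead exploit the $x\leftrightarrow x+1$ symmetry via the Artin--Schreier substitution $y=x^2+x$, collapsing the problem to a single irreducible degree-$9$ polynomial $g(y)=y^9+y^4+y^2+y+1$; this is conceptually neat and halves the degree before the irreducibility check, but the price is that the converse now requires the trace argument to lift a root $y_0$ back to an $x_0\in\mathbb{F}_{2^n}$. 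In effect, the paper's two irreducible nonics in $x$ are precisely the two preimages of your single nonic in $y$ under $y=x^2+x$ (the roots of $x^9+x+1$ and $x^9+x^8+1$ pair off as $\{x_0,x_0+1\}$), so the two arguments are equivalent reformulations, with the paper's version giving the shorter path to the ``only if'' direction and yours giving a tidier single-polynomial picture.
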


\begin{proof}  It suffices to prove that the eqution
	\begin{equation}\label{3.1.1}
		(x+1)^{-9}+x^{-9}+1=0
	\end{equation}
	has no solution in $\mathbb{F}_{2^{n}}\setminus{\mathbb{F}_2}$ when $9\nmid n$. Suppose that $x\in \mathbb{F}_{2^{n}}\setminus{\mathbb{F}_2}$ is a solution of (\ref{3.1.1}). Multiplying $x^9(x+1)^9$ on both sides of (\ref{3.1.1}), we obtain
	\begin{equation}\label{3.1.2}
		x^9+(x+1)^9+x^9(x+1)^9=0. 
	\end{equation}
	It is not difficult to see that (\ref{3.1.2}) can be decomposed as
	\begin{equation}\label{3.1.3}
		(x^9+x+1)(x^9+x^8+1)=0, 
	\end{equation}
	where $x^9+x+1$ and $x^9+x^8+1$ are irreducible polynomials over $\mathbb{F}_2$. Without loss of generality, suppose $x^9+x+1=0$, then $x\in\mathbb{F}_{2^9}\setminus\mathbb{F}_{2^3}$. Since $x\in\mathbb{F}_{2^9}$ and $9\nmid n$,  $x\in\mathbb{F}_{2^9}\cap \mathbb{F}_{2^n}=\mathbb{F}_{2^{(9,n)}}=\mathbb{F}_{2^3}$ or $\mathbb{F}_{2}$. We assert that such $x$ do no exist since $x\notin\mathbb{F}_{2^3}$. Therefore, (\ref{3.1.1}) has no solution in $\mathbb{F}_{2^{n}}\setminus{\mathbb{F}_2}$. Moreover, it is not difficult to see that (\ref{3.1.3}) has solution $x\neq 0, 1$ when $9|n$. The proof is completed.
\end{proof}




\end{document}